\def\conv{{\rm{Conv}}}
\def\zero{{\boldsymbol{0}}}
\def\bb0{{\mathbb{0}}}
\def\bb{{\boldsymbol{b}}}
\def\b0{{\boldsymbol{0}}}
\def\b{{\mathrm{b}}}
\def\r0{{\mathbf{0}}}
\def\bbC{{\mathbb{C}}}
\def\bsf0{{\bm{\mathsf{0}}}}
\def\N0{{N_{\mathrm{0}}}}
\def\bsf{{\boldsymbol{s}_\mathrm{f}}}
\newcommand{\be}{\begin{equation}}
\newcommand{\ee}{\end{equation}}
\newcommand{\bal}{\begin{align}}
\newcommand{\eal}{\end{align}}
\theoremstyle{remark}
\newtheorem{theorem}{Theorem}
\newtheorem{lemma}{Lemma}
\begin{document}
%
\title{Optimum Discrete Beamforming via \\ Minkowski Sum of Polygons
\thanks{H. Do and A. Lozano are with Univ. Pompeu Fabra, 08018 Barcelona (e-mail:\{heedong.do, angel.lozano\}@upf.edu).
Their work is supported by ICREA and by the Maria de Maeztu Units of Excellence Programme CEX2021-001195-M funded by MICIU/AEI/10.13039/501100011033.}
}
\author{\IEEEauthorblockN{Heedong~Do},
 {\it Member,~IEEE},
 \and
 \IEEEauthorblockN{Angel~Lozano},
{\it Fellow,~IEEE}
\vspace{-4mm}
}
\maketitle



%


\maketitle

\begin{abstract}
This letter casts the problem of optimum discrete beamforming as the computation of the Minkowski sum of convex polygons, which is itself a convex polygon. The number of vertices of the latter is at most the sum of the number of vertices of the original polygons, enabling its efficient computation. This original and intuitive formulation confirms that the optimum beamforming solution can be found efficiently.  
\end{abstract}

\begin{IEEEkeywords}
Beamforming, phased array, Minkowski sum, convex polygons, reconfigurable intelligent surface.
\end{IEEEkeywords}


%
\IEEEpeerreviewmaketitle




\section{Optimum Discrete Beamforming}


Consider a transmitter equipped with an $N$-antenna array and a single-antenna receiver.
The phase shift for the $n$th transmit antenna is chosen from a finite set $\Theta_n\subset \bbC$. Denoting its channel to the receiver by $h_n\in\bbC$, the maximization of the beamforming gain can be formulated as
\begin{align}
    \max_{w_1,\ldots,w_N} & \quad \bigg| \sum_n w_n h_n \bigg| \label{original_problem}\\
    \text{s.t.} & \quad w_n\in\Theta_n. \nonumber
\end{align}

As shown in \cite{mackenthun1994fast, sweldens2001fast, motedayenaval2003polynomial, alevizos2016log, deng2019mmwave, sanchez2021optimal, zhang2022configuring, ren2023ieee, vardakis2023intelligently, pekcan2024achieving, sanjay2024optimum, kutay2024received}, the solution to \eqref{original_problem} can be efficiently computed without exhaustively searching over $\Theta_1\times \cdots \times \Theta_N$. This letter reaffirms this finding,
with an alternative formulation that is particularly insightful. 
%
%
The key observation is that \eqref{original_problem} is equivalent to 
\begin{align}
    \max_{z} & \quad |z| \label{minkowski_sum_problem} \\
    \text{s.t.} & \quad z\in h_1\Theta_1 + \cdots + h_N \Theta_N, \nonumber
\end{align}
where the summations 
are Minkowski sums of sets. 


\begin{algorithm}[t]
\caption{Minkowski sum of polygons}\label{algo:compute_minkowski_sum}
\begin{algorithmic}
    \State \textbf{Inputs:} convex polygons as lists of vertices:
    \begin{align*}
    \begin{matrix}
       \text{polygon}& \hspace*{-0.8em} P_1 & p_{1,1} & p_{1,2} & \cdots & p_{1,M_1}\\
       \vdots \\
       \text{polygon}& \hspace*{-0.8em} P_N & p_{N,1} & p_{N,2} & \cdots & p_{N,M_N}
    \end{matrix}
    \end{align*}
    with the vertices in each row arranged counterclockwise from the one with smallest imaginary part, with the convention that $p_{n,M_n+1} = p_{n,1}$.
    \State \textbf{Outputs:} an array of indices:
    \begin{align*}
    \begin{matrix}
        \text{1st vertex}& m_{1,1} & m_{1,2} & \cdots & m_{1,N}\\
        & \vdots & \vdots & \ddots & \vdots\\
        \text{$K$th vertex} & m_{K,1} & m_{K,2} & \cdots & m_{K,N}
    \end{matrix}
    \end{align*}
    where $K = \sum_n M_n$.
    The $k$th vertex of $P_1  + \cdots + P_N$ is
    \begin{align*}
        p_{1,m_{k,1}} + p_{2,m_{k,2}} + \cdots + p_{N,m_{k,N}}
    \end{align*}
    \hrule
    \vspace*{0.3em}
    \State Initialize an empty list
    \For{$n=1,2,\ldots,N$}
        \For{$m=1,2,\ldots,M_n$}
            \State Compute $p_{n,m+1}-p_{n,m}$, the $m$th edge of the $n$th \\
            \hspace*{2.7em} polygon
            \State Add a pair $(\arg(p_{n,m+1}-p_{n,m}), n)$ to the list with\\
            \hspace*{2.7em} a convention that the argument is in $[0,2\pi)$
        \EndFor
    \EndFor
    \vspace*{0.1em}
    \State Sort the list by the first component of the items, from smallest to largest
    \State $(\ell_1,\ell_2,\cdots,\ell_N) \gets (1,1,\cdots,1)$
    \For{$k=1,2,\ldots,K$}
        \State $(m_{k,1}, m_{k,2}, \cdots, m_{k,N}) \gets (\ell_1,\ell_2,\cdots,\ell_N)$
        \State $\ell_n \gets \ell_n + 1$, where $n$ is the second component of the\\
        \hspace*{1.25em} $k$th item in the sorted list
    \EndFor
\end{algorithmic}
\end{algorithm}

\begin{theorem}
\label{theorem:evaluation_at_vertices}
The problem in \eqref{minkowski_sum_problem} can be reformulated as
\begin{align}
    \max_{z} & \quad |z| \\
    \text{s.t.} & \quad z \text{ is a vertex of } h_1\,\conv\,\Theta_1 + \cdots + h_N\, \conv\,\Theta_N , \nonumber
\end{align}
where $\conv(\cdot)$ returns the convex hull of a set.
\end{theorem}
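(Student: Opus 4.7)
The plan rests on two facts. First, the modulus $|\cdot|:\mathbb{C}\to\mathbb{R}$ is a convex function, so when maximized over any compact convex set in $\mathbb{C}$, its maximum is attained at an extreme point of that set. Second, the Minkowski sum commutes with the convex hull operation, i.e., $\conv(A+B)=\conv(A)+\conv(B)$, which extends inductively to any finite number of summands. Both are standard and I would just invoke them.

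With these in hand, I would argue as follows. Since $\Theta_n\subseteq \conv\,\Theta_n$ for every $n$, the feasible set of \eqref{minkowski_sum_problem} satisfies
\begin{equation*}
h_1\Theta_1+\cdots+h_N\Theta_N \,\subseteq\, h_1\,\conv\,\Theta_1+\cdots+h_N\,\conv\,\Theta_N,
\end{equation*}
so the optimal value of \eqref{minkowski_sum_problem} is upper-bounded by $\max_{z\in Q}|z|$ where $Q\bydef h_1\,\conv\,\Theta_1+\cdots+h_N\,\conv\,\Theta_N$. Being a Minkowski sum of convex polygons (each $\Theta_n$ is finite so each $\conv\,\Theta_n$ is a polygon), the set $Q$ is itself a compact convex polygon. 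By convexity of $|\cdot|$, $\max_{z\in Q}|z|$ is attained at a vertex of $Q$, which proves one direction and already yields the restriction to vertices claimed in the theorem.

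For the matching lower bound, I would invoke the fact that every vertex of $Q$ arises as a sum $h_1 v_1+\cdots+h_N v_N$ with each $v_n$ an extreme point of $\conv\,\Theta_n$, hence $v_n\in\Theta_n$. This can be seen by picking a supporting hyperplane at the vertex and noting that the supporting functional of a Minkowski sum equals the sum of the individual supporting functionals, so the vertex of $Q$ decomposes into the corresponding vertices of the summands. Consequently every vertex of $Q$ lies in $h_1\Theta_1+\cdots+h_N\Theta_N$, which is the feasible set of \eqref{minkowski_sum_problem}, so the reverse inequality holds and the two optimization problems share the same optimal value.

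The main obstacle, if any, is formalizing the vertex-decomposition step cleanly; this is the only piece that uses more than the two facts stated at the outset, but it follows in one line from the additivity of support functions under Minkowski addition. The rest is bookkeeping.
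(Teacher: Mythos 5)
Your proof is correct, but it follows a genuinely different route from the paper's. The paper proves the theorem by applying one elementary lemma twice: for any compact $S\subset\bbC$, $\max_{z\in S}|z| = \max_{z\in \conv S}|z|$, established directly via the triangle inequality on convex combinations. Applied once, this lemma replaces the feasible set $h_1\Theta_1+\cdots+h_N\Theta_N$ by its convex hull, which equals $Q = h_1\,\conv\,\Theta_1+\cdots+h_N\,\conv\,\Theta_N$ by the commutation of $\conv$ with Minkowski sums; applied a second time with $S$ taken to be the vertex set of $Q$ (whose convex hull is $Q$ itself), it restricts the maximization to the vertices. This sidesteps both of your ingredients: there is no appeal to convexity of the modulus or to the extreme-point maximum principle, and---more notably---no need for your vertex-decomposition step, because equality of optimal values never requires showing that the vertices of $Q$ are feasible for \eqref{minkowski_sum_problem}. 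In your sandwich argument that decomposition carries the lower bound, and it is indeed the one delicate step, though your sketch of it is sound: vertices of a polygon are exposed, the support function of a Minkowski sum is the sum of the support functions, so the singleton maximizing face of $Q$ splits into singleton faces of the summands, each an extreme point of $h_n\,\conv\,\Theta_n$ and hence of the form $h_n v_n$ with $v_n\in\Theta_n$. What the extra work buys you is a constructive fact the paper's value-equality proof leaves implicit: every vertex of $Q$ is realized by a feasible tuple $(w_1,\ldots,w_N)$, which is precisely what the index bookkeeping in Alg.~\ref{algo:compute_minkowski_sum} exploits to recover the optimal phase configuration rather than merely the optimal value.
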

\begin{proof}
See Appendix.
\end{proof}


The implications of Thm. \ref{theorem:evaluation_at_vertices}, fleshed out in the next section,
extend beyond classical beamforming because \eqref{original_problem} subsumes
\begin{align}
    \max_{w_1,\ldots,w_N} & \quad \bigg|h_0 + \sum_n w_n h_n \bigg| \label{ris_problem}\\
    \text{s.t.} & \quad w_n\in\Theta_n, \nonumber
\end{align}
which is relevant because the beamforming optimization for reconfigurable intelligent surfaces
\cite{hashemi2024optimal}
\begin{align}
    \max_{w_1,\ldots,w_N} & \quad \bigg|h_0 + \sum_n w_n h_n \bigg|\\
    \text{s.t.} & \quad w_n\in \{tw: t\in[0,1], w\in \Theta_n\} , \nonumber
\end{align}
can be mapped back to \eqref{ris_problem} via
$\Theta_n \gets \{\zero\} \cup \Theta_n$ given that
\begin{align}
    \conv \{tw: t\in[0,1], w\in \Theta_n\} = \conv\big(\{\zero\}\cup \Theta_n\big) .
\end{align}

\begin{figure*}
    \centering
    \subfloat[Convex polygon $P_1$]
    {
        \begin{tikzpicture}[>=stealth]
        \begin{scope}
            \clip (-2.1,-2.1) rectangle (2.1,2.1);
            \draw[line width=0.5pt, ->] (-2.1,0) -- (2.1,0) node (X) [right,xshift = -0.6 cm, yshift = -0.2 cm]{Re};
            \draw[line width=0.5pt, ->] (0,-2.1) -- (0,2.1) node[above,xshift = -0.3 cm, yshift = -0.4 cm]{Im};
            \coordinate (P1) at (0.4,-0.7);
            \coordinate (P2) at (0.6,0.7);
            \coordinate (P3) at (-0.4,0.6);
            \coordinate (P4) at (-0.3,-0.4);
            \filldraw[color=red] (P1) circle (1.5pt) node[below]{$p_{1,1}$};
            \filldraw[color=red] (P2) circle (1.5pt) node[above]{$p_{1,2}$};
            \filldraw[color=red] (P3) circle (1.5pt) node[above]{$p_{1,3}$};
            \filldraw[color=red] (P4) circle (1.5pt) node[below]{$p_{1,4}$};
            \filldraw[line width=1pt, color=red, fill=red, fill opacity=0.3] 
            (P1) -- (P2) -- (P3) -- (P4) -- cycle;
        \end{scope}
        \end{tikzpicture}
    }
    \subfloat[Convex polygon $P_2$]
    {
        \begin{tikzpicture}[>=stealth]
        \begin{scope}
            \clip (-2.1,-2.1) rectangle (2.1,2.1);
            \draw[line width=0.5pt, ->] (-2.1,0) -- (2.1,0) node (X) [right,xshift = -0.6 cm, yshift = -0.2 cm]{Re};
            \draw[line width=0.5pt, ->] (0,-2.1) -- (0,2.1) node[above,xshift = -0.3 cm, yshift = -0.4 cm]{Im};
            
            \coordinate (Q1) at (0.2,-0.7);
            \coordinate (Q2) at (0.1,0.3);
            \coordinate (Q3) at (-0.4,-0.1);
            \filldraw[color=blue] (Q1) circle (1.5pt) node[below, xshift=0.2cm]{$p_{2,1}$};
            \filldraw[color=blue] (Q2) circle (1.5pt) node[right]{$p_{2,2}$};
            \filldraw[color=blue] (Q3) circle (1.5pt) node[left, yshift=-0.1cm]{$p_{2,3}$};
            \filldraw[line width=1pt, color=blue, fill=blue, fill opacity=0.3] 
            (Q1) -- (Q2) -- (Q3)-- cycle;
        \end{scope}
        \end{tikzpicture}
    }
    \subfloat[Edges]
    {
        \begin{tikzpicture}[>=stealth]
        \begin{scope}
            \clip (-2.1,-2.1) rectangle (2.1,2.1);
            \draw[line width=0.5pt, ->] (-2.1,0) -- (2.1,0) node (X) [right,xshift = -0.6 cm, yshift = -0.2 cm]{Re};
            \draw[line width=0.5pt, ->] (0,-2.1) -- (0,2.1) node[above,xshift = -0.3 cm, yshift = -0.4 cm]{Im};
            
            \coordinate (P1) at (0.4,-0.7);
            \coordinate (P2) at (0.6,0.7);
            \coordinate (P3) at (-0.4,0.6);
            \coordinate (P4) at (-0.3,-0.4);
            \coordinate (Q1) at (0.2,-0.7);
            \coordinate (Q2) at (0.1,0.3);
            \coordinate (Q3) at (-0.4,-0.1);
            \coordinate (O) at (0,0);

            \draw[line width=1pt, red] (O) -- ($(P2)-(P1)$) node[right]{};
            \draw[line width=1pt, red] (O) -- ($(P3)-(P2)$) node[left]{};
            \draw[line width=1pt, red] (O) -- ($(P4)-(P3)$) node[left]{};
            \draw[line width=1pt, red] (O) -- ($(P1)-(P4)$) node[right]{};
            
            \draw[line width=1pt, blue] (O) -- ($(Q2)-(Q1)$) node[right]{};
            \draw[line width=1pt, blue] (O) -- ($(Q3)-(Q2)$) node[left]{};
            \draw[line width=1pt, blue] (O) -- ($(Q1)-(Q3)$) node[below]{};
        \end{scope}
        \end{tikzpicture}
    }
    \subfloat[Minkowski sum $P_1+P_2$]
    {
        \begin{tikzpicture}[>=stealth]
        \begin{scope}
            \clip (-2.1,-2.1) rectangle (2.1,2.1);
            \draw[line width=0.5pt, ->] (-2.1,0) -- (2.1,0) node (X) [right,xshift = -0.6 cm, yshift = -0.2 cm]{Re};
            \draw[line width=0.5pt, ->] (0,-2.1) -- (0,2.1) node[above,xshift = -0.3 cm, yshift = -0.4 cm]{Im};
            
            \coordinate (P1) at (0.4,-0.7);
            \coordinate (P2) at (0.6,0.7);
            \coordinate (P3) at (-0.4,0.6);
            \coordinate (P4) at (-0.3,-0.4);
            \coordinate (Q1) at (0.2,-0.7);
            \coordinate (Q2) at (0.1,0.3);
            \coordinate (Q3) at (-0.4,-0.1);
            \coordinate (O) at (0,0);

            \coordinate (R1) at ($(P1)+(Q1)$);
            \coordinate (R2) at ($(P2)+(Q1)$);
            \coordinate (R3) at ($(P2)+(Q2)$);
            \coordinate (R4) at ($(P3)+(Q2)$);
            \coordinate (R5) at ($(P3)+(Q3)$);
            \coordinate (R6) at ($(P4)+(Q3)$);
            \coordinate (R7) at ($(P4)+(Q1)$);
            \fill[fill=violet, fill opacity=0.3] 
            (R1) -- (R2) -- (R3) -- (R4) -- (R5) -- (R6) -- (R7) -- cycle;
            \draw[line width=1pt, red] (R1) -- (R2) node[]{};
            \draw[line width=1pt, blue] (R2) -- (R3) node[]{};
            \draw[line width=1pt, red] (R3) -- (R4) node[]{};
            \draw[line width=1pt, blue] (R4) -- (R5) node[]{};
            \draw[line width=1pt, red] (R5) -- (R6) node[]{};
            \draw[line width=1pt, blue] (R6) -- (R7) node[]{};
            \draw[line width=1pt, red] (R7) -- (R1) node[]{};
            \filldraw[color=violet, text=black] (R1) circle (1.5pt) node[below]{$(\textcolor{red}{1},\textcolor{blue}{1})$};
            \filldraw[color=violet, text=black] (R2) circle (1.5pt) node[above right]{$(\textcolor{red}{2},\textcolor{blue}{1})$};
            \filldraw[color=violet, text=black] (R3) circle (1.5pt) node[above]{$(\textcolor{red}{2},\textcolor{blue}{2})$};
            \filldraw[color=violet, text=black] (R4) circle (1.5pt) node[left, xshift=0.1cm, yshift=0.2cm]{$(\textcolor{red}{3},\textcolor{blue}{2})$};
            \filldraw[color=violet, text=black] (R5) circle (1.5pt) node[left]{$(\textcolor{red}{3},\textcolor{blue}{3})$};
            \filldraw[color=violet, text=black] (R6) circle (1.5pt) node[left]{$(\textcolor{red}{4},\textcolor{blue}{3})$};
            \filldraw[color=violet, text=black] (R7) circle (1.5pt) node[left, yshift=-0.1cm]{$(\textcolor{red}{4},\textcolor{blue}{1})$};
        \end{scope}
        \end{tikzpicture}
    }
    \caption{Visualization of Alg. \ref{algo:compute_minkowski_sum} for $N=2$. If two or more edges are parallel, they can be merged and the corresponding vertices removed.}
    \label{fig:algorithm}
\end{figure*}
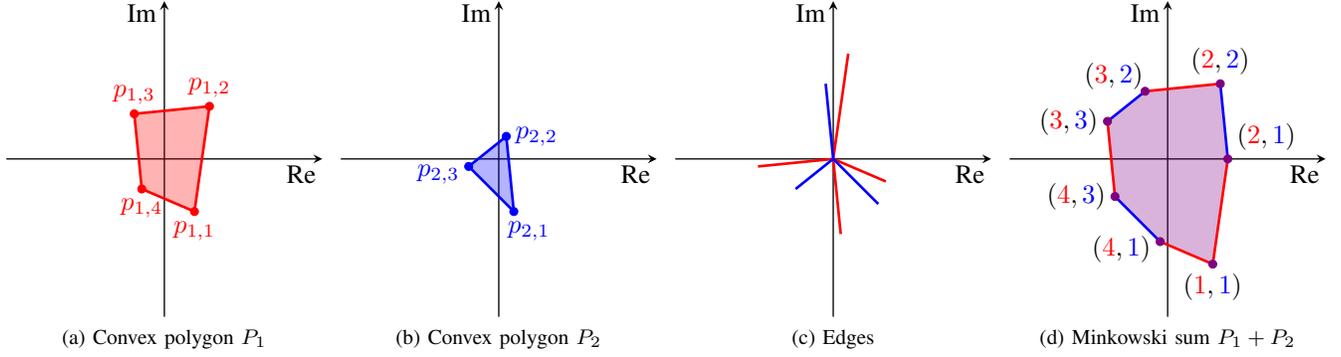

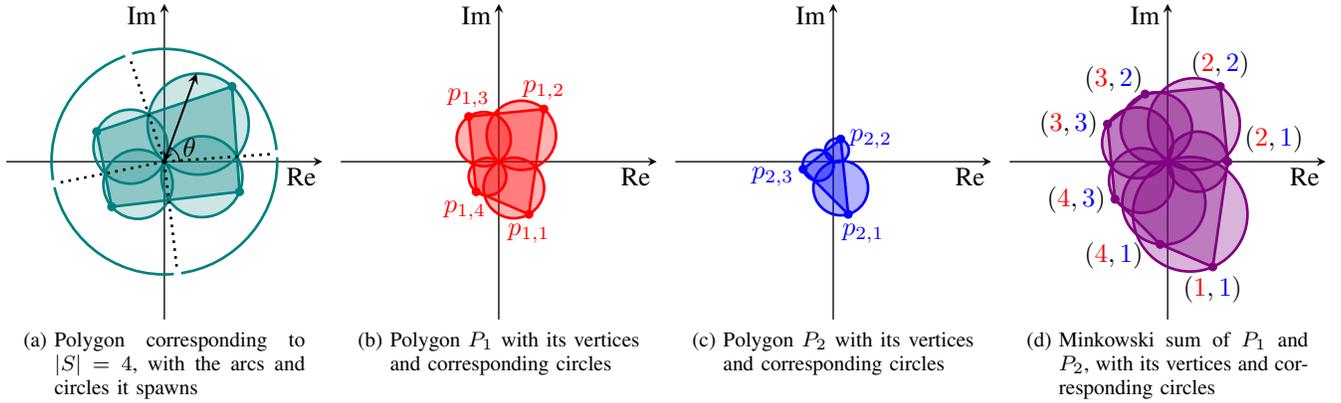
\begin{figure*}
    \centering
    \captionsetup[subfigure]{format=hang, margin=10pt}
    \subfloat[Polygon corresponding to $|S|=4$, with the arcs and circles it spawns]
    {\label{fig:support_function_definition}
        \begin{tikzpicture}[>=stealth]
        \begin{scope}
            \clip (-2.1,-2.1) rectangle (2.1,2.1);
            \draw[line width=0.5pt, ->] (-2.1,0) -- (2.1,0) node (X) [right,xshift = -0.6 cm, yshift = -0.2 cm]{Re};
            \draw[line width=0.5pt, ->] (0,-2.1) -- (0,2.1) node[above,xshift = -0.3 cm, yshift = -0.4 cm]{Im};
            \coordinate (O) at (0,0);
            \coordinate (P1) at (1,-0.4);
            \coordinate (P2) at (0.9,1);
            \coordinate (P3) at (-0.9,0.4);
            \coordinate (P4) at (-0.7,-0.6);
            \filldraw[color=teal] (P1) circle (1.5pt) node[below]{};
            \filldraw[color=teal] (P2) circle (1.5pt) node[above]{};
            \filldraw[color=teal] (P3) circle (1.5pt) node[above]{};
            \filldraw[color=teal] (P4) circle (1.5pt) node[below]{};
            \filldraw[line width=1pt, color=teal, fill=teal, fill opacity=0.3] 
            (P1) -- (P2) -- (P3) -- (P4) -- cycle;

            \coordinate (C) at ($(O)!0.5!(P1)$);
            \pgfpointanchor{C}{center}
            \pgfgetlastxy{\Cx}{\Cy} 
            \pgfmathsetmacro{\Radius}{veclen(\Cx,\Cy)/28.45276}
            \filldraw[line width=1pt, color=teal, fill=teal, fill opacity=0.2] (C) circle (\Radius);

            \coordinate (C) at ($(O)!0.5!(P2)$);
            \pgfpointanchor{C}{center}
            \pgfgetlastxy{\Cx}{\Cy} 
            \pgfmathsetmacro{\Radius}{veclen(\Cx,\Cy)/28.45276}
            \filldraw[line width=1pt, color=teal, fill=teal, fill opacity=0.2] (C) circle (\Radius);

            \coordinate (C) at ($(O)!0.5!(P3)$);
            \pgfpointanchor{C}{center}
            \pgfgetlastxy{\Cx}{\Cy} 
            \pgfmathsetmacro{\Radius}{veclen(\Cx,\Cy)/28.45276}
            \filldraw[line width=1pt, color=teal, fill=teal, fill opacity=0.2] (C) circle (\Radius);

            \coordinate (C) at ($(O)!0.5!(P4)$);
            \pgfpointanchor{C}{center}
            \pgfgetlastxy{\Cx}{\Cy} 
            \pgfmathsetmacro{\Radius}{veclen(\Cx,\Cy)/28.45276}
            \filldraw[line width=1pt, color=teal, fill=teal, fill opacity=0.2] (C) circle (\Radius);

            \coordinate (theta) at (70:1.25);
            \draw[line width=0.75pt, <->] (O)--(theta) node[midway, right, xshift=-0.5mm]{};
            \draw[line width=0.5pt] (0.2,0) arc (0:70:0.2) node[right, xshift=0.5mm]{$\theta$};

            \draw[line width=1pt, color=teal] (O) circle (1.5);
            \coordinate (projection1) at ($(P1)!(O)!(P2)$);
            \coordinate (projection2) at ($(P2)!(O)!(P3)$);
            \coordinate (projection3) at ($(P3)!(O)!(P4)$);
            \coordinate (projection4) at ($(P4)!(O)!(P1)$);
            \coordinate (break1) at ($(O)!1.5cm!(projection1)$);
            \coordinate (break2) at ($(O)!1.5cm!(projection2)$);
            \coordinate (break3) at ($(O)!1.5cm!(projection3)$);
            \coordinate (break4) at ($(O)!1.5cm!(projection4)$);
            \draw[line width=1pt, dotted] (O) -- (break1);
            \draw[line width=1pt, dotted] (O) -- (break2);
            \draw[line width=1pt, dotted] (O) -- (break3);
            \draw[line width=1pt, dotted] (O) -- (break4);
            \fill[white] (break1) circle (0.5mm);
            \fill[white] (break2) circle (0.5mm);
            \fill[white] (break3) circle (0.5mm);
            \fill[white] (break4) circle (0.5mm);
        \end{scope}
        \end{tikzpicture}
    }
    \subfloat[Polygon $P_1$ with its vertices and corresponding circles]
    {
        \begin{tikzpicture}[>=stealth]
        \begin{scope}
            \clip (-2.1,-2.1) rectangle (2.1,2.1);
            \draw[line width=0.5pt, ->] (-2.1,0) -- (2.1,0) node (X) [right,xshift = -0.6 cm, yshift = -0.2 cm]{Re};
            \draw[line width=0.5pt, ->] (0,-2.1) -- (0,2.1) node[above,xshift = -0.3 cm, yshift = -0.4 cm]{Im};
            \coordinate (O) at (0,0);
            \coordinate (P1) at (0.4,-0.7);
            \coordinate (P2) at (0.6,0.7);
            \coordinate (P3) at (-0.4,0.6);
            \coordinate (P4) at (-0.3,-0.4);
            \filldraw[color=red] (P1) circle (1.5pt) node[below]{$p_{1,1}$};
            \filldraw[color=red] (P2) circle (1.5pt) node[above]{$p_{1,2}$};
            \filldraw[color=red] (P3) circle (1.5pt) node[above]{$p_{1,3}$};
            \filldraw[color=red] (P4) circle (1.5pt) node[below,xshift=-0.15cm]{$p_{1,4}$};
            \filldraw[line width=1pt, color=red, fill=red, fill opacity=0.3] 
            (P1) -- (P2) -- (P3) -- (P4) -- cycle;

            \coordinate (C) at ($(O)!0.5!(P1)$);
            \pgfpointanchor{C}{center}
            \pgfgetlastxy{\Cx}{\Cy} 
            \pgfmathsetmacro{\Radius}{veclen(\Cx,\Cy)/28.45276}
            \filldraw[line width=1pt, color=red, fill=red, fill opacity=0.3] (C) circle (\Radius);

            \coordinate (C) at ($(O)!0.5!(P2)$);
            \pgfpointanchor{C}{center}
            \pgfgetlastxy{\Cx}{\Cy} 
            \pgfmathsetmacro{\Radius}{veclen(\Cx,\Cy)/28.45276}
            \filldraw[line width=1pt, color=red, fill=red, fill opacity=0.3] (C) circle (\Radius);

            \coordinate (C) at ($(O)!0.5!(P3)$);
            \pgfpointanchor{C}{center}
            \pgfgetlastxy{\Cx}{\Cy} 
            \pgfmathsetmacro{\Radius}{veclen(\Cx,\Cy)/28.45276}
            \filldraw[line width=1pt, color=red, fill=red, fill opacity=0.3] (C) circle (\Radius);

            \coordinate (C) at ($(O)!0.5!(P4)$);
            \pgfpointanchor{C}{center}
            \pgfgetlastxy{\Cx}{\Cy} 
            \pgfmathsetmacro{\Radius}{veclen(\Cx,\Cy)/28.45276}
            \filldraw[line width=1pt, color=red, fill=red, fill opacity=0.3] (C) circle (\Radius);
            
        \end{scope}
        \end{tikzpicture}
    }
    \subfloat[Polygon $P_2$ with its vertices and corresponding circles]
    {
        \begin{tikzpicture}[>=stealth]
        \begin{scope}
            \clip (-2.1,-2.1) rectangle (2.1,2.1);
            \draw[line width=0.5pt, ->] (-2.1,0) -- (2.1,0) node (X) [right,xshift = -0.6 cm, yshift = -0.2 cm]{Re};
            \draw[line width=0.5pt, ->] (0,-2.1) -- (0,2.1) node[above,xshift = -0.3 cm, yshift = -0.4 cm]{Im};
            
            \coordinate (Q1) at (0.2,-0.7);
            \coordinate (Q2) at (0.1,0.3);
            \coordinate (Q3) at (-0.4,-0.1);
            \filldraw[color=blue] (Q1) circle (1.5pt) node[below, xshift=0.2cm]{$p_{2,1}$};
            \filldraw[color=blue] (Q2) circle (1.5pt) node[right]{$p_{2,2}$};
            \filldraw[color=blue] (Q3) circle (1.5pt) node[left, yshift=-0.1cm]{$p_{2,3}$};
            \filldraw[line width=1pt, color=blue, fill=blue, fill opacity=0.3] 
            (Q1) -- (Q2) -- (Q3)-- cycle;

            \coordinate (C) at ($(O)!0.5!(Q1)$);
            \pgfpointanchor{C}{center}
            \pgfgetlastxy{\Cx}{\Cy} 
            \pgfmathsetmacro{\Radius}{veclen(\Cx,\Cy)/28.45276}
            \filldraw[line width=1pt, color=blue, fill=blue, fill opacity=0.3] (C) circle (\Radius);

            \coordinate (C) at ($(O)!0.5!(Q2)$);
            \pgfpointanchor{C}{center}
            \pgfgetlastxy{\Cx}{\Cy} 
            \pgfmathsetmacro{\Radius}{veclen(\Cx,\Cy)/28.45276}
            \filldraw[line width=1pt, color=blue, fill=blue, fill opacity=0.3] (C) circle (\Radius);

            \coordinate (C) at ($(O)!0.5!(Q3)$);
            \pgfpointanchor{C}{center}
            \pgfgetlastxy{\Cx}{\Cy} 
            \pgfmathsetmacro{\Radius}{veclen(\Cx,\Cy)/28.45276}
            \filldraw[line width=1pt, color=blue, fill=blue, fill opacity=0.3] (C) circle (\Radius);
        \end{scope}
        \end{tikzpicture}
    }
    \subfloat[Minkowski sum of $P_1$ and $P_2$, with its vertices and corresponding circles]
    {\label{fig:support_function_minkowski_sum}
        \begin{tikzpicture}[>=stealth]
        \begin{scope}
            \clip (-2.1,-2.1) rectangle (2.1,2.1);
            \draw[line width=0.5pt, ->] (-2.1,0) -- (2.1,0) node (X) [right,xshift = -0.6 cm, yshift = -0.2 cm]{Re};
            \draw[line width=0.5pt, ->] (0,-2.1) -- (0,2.1) node[above,xshift = -0.3 cm, yshift = -0.4 cm]{Im};
            
            \coordinate (P1) at (0.4,-0.7);
            \coordinate (P2) at (0.6,0.7);
            \coordinate (P3) at (-0.4,0.6);
            \coordinate (P4) at (-0.3,-0.4);
            \coordinate (Q1) at (0.2,-0.7);
            \coordinate (Q2) at (0.1,0.3);
            \coordinate (Q3) at (-0.4,-0.1);
            \coordinate (O) at (0,0);

            \coordinate (R1) at ($(P1)+(Q1)$);
            \coordinate (R2) at ($(P2)+(Q1)$);
            \coordinate (R3) at ($(P2)+(Q2)$);
            \coordinate (R4) at ($(P3)+(Q2)$);
            \coordinate (R5) at ($(P3)+(Q3)$);
            \coordinate (R6) at ($(P4)+(Q3)$);
            \coordinate (R7) at ($(P4)+(Q1)$);
            \filldraw[line width=1pt, color=violet, fill=violet, fill opacity=0.3] 
            (R1) -- (R2) -- (R3) -- (R4) -- (R5) -- (R6) -- (R7) -- cycle;
            \filldraw[color=violet, text=black] (R1) circle (1.5pt) node[below]{$(\textcolor{red}{1},\textcolor{blue}{1})$};
            \filldraw[color=violet, text=black] (R2) circle (1.5pt) node[above right, xshift=0.1cm]{$(\textcolor{red}{2},\textcolor{blue}{1})$};
            \filldraw[color=violet, text=black] (R3) circle (1.5pt) node[above]{$(\textcolor{red}{2},\textcolor{blue}{2})$};
            \filldraw[color=violet, text=black] (R4) circle (1.5pt) node[left, xshift=0.1cm, yshift=0.2cm]{$(\textcolor{red}{3},\textcolor{blue}{2})$};
            \filldraw[color=violet, text=black] (R5) circle (1.5pt) node[left]{$(\textcolor{red}{3},\textcolor{blue}{3})$};
            \filldraw[color=violet, text=black] (R6) circle (1.5pt) node[left]{$(\textcolor{red}{4},\textcolor{blue}{3})$};
            \filldraw[color=violet, text=black] (R7) circle (1.5pt) node[left, xshift =-0.1cm, yshift=-0.15
            cm]{$(\textcolor{red}{4},\textcolor{blue}{1})$};

            \coordinate (C) at ($(O)!0.5!(R1)$);
            \pgfpointanchor{C}{center}
            \pgfgetlastxy{\Cx}{\Cy} 
            \pgfmathsetmacro{\Radius}{veclen(\Cx,\Cy)/28.45276}
            \filldraw[line width=1pt, color=violet, fill=violet, fill opacity=0.3] (C) circle (\Radius);

            \coordinate (C) at ($(O)!0.5!(R2)$);
            \pgfpointanchor{C}{center}
            \pgfgetlastxy{\Cx}{\Cy} 
            \pgfmathsetmacro{\Radius}{veclen(\Cx,\Cy)/28.45276}
            \filldraw[line width=1pt, color=violet, fill=violet, fill opacity=0.3] (C) circle (\Radius);

            \coordinate (C) at ($(O)!0.5!(R3)$);
            \pgfpointanchor{C}{center}
            \pgfgetlastxy{\Cx}{\Cy} 
            \pgfmathsetmacro{\Radius}{veclen(\Cx,\Cy)/28.45276}
            \filldraw[line width=1pt, color=violet, fill=violet, fill opacity=0.3] (C) circle (\Radius);

            \coordinate (C) at ($(O)!0.5!(R4)$);
            \pgfpointanchor{C}{center}
            \pgfgetlastxy{\Cx}{\Cy} 
            \pgfmathsetmacro{\Radius}{veclen(\Cx,\Cy)/28.45276}
            \filldraw[line width=1pt, color=violet, fill=violet, fill opacity=0.3] (C) circle (\Radius);

            \coordinate (C) at ($(O)!0.5!(R5)$);
            \pgfpointanchor{C}{center}
            \pgfgetlastxy{\Cx}{\Cy} 
            \pgfmathsetmacro{\Radius}{veclen(\Cx,\Cy)/28.45276}
            \filldraw[line width=1pt, color=violet, fill=violet, fill opacity=0.3] (C) circle (\Radius);

            \coordinate (C) at ($(O)!0.5!(R6)$);
            \pgfpointanchor{C}{center}
            \pgfgetlastxy{\Cx}{\Cy} 
            \pgfmathsetmacro{\Radius}{veclen(\Cx,\Cy)/28.45276}
            \filldraw[line width=1pt, color=violet, fill=violet, fill opacity=0.3] (C) circle (\Radius);

            \coordinate (C) at ($(O)!0.5!(R7)$);
            \pgfpointanchor{C}{center}
            \pgfgetlastxy{\Cx}{\Cy} 
            \pgfmathsetmacro{\Radius}{veclen(\Cx,\Cy)/28.45276}
            \filldraw[line width=1pt, color=violet, fill=violet, fill opacity=0.3] (C) circle (\Radius);
        \end{scope}
        \end{tikzpicture}
    }
    \caption{For $N=2$ with set cardinalities $4$ and $3$, visualization of the circles spawn by the respective polygons, and their Minkowski sum.
    }
    \label{fig:support_function}
\end{figure*}


\section{Enumeration of Vertices}

As per Thm. \ref{theorem:evaluation_at_vertices}, it is enough to enumerate the vertices of 
\begin{align}
\label{fred}
    h_1\,\conv\,\Theta_1 + \cdots + h_N\, \conv\,\Theta_N
\end{align}
and evaluate their moduli. Although the cardinality of the set
$
     h_1\Theta_1 + \cdots + h_N \Theta_N
$
can be as large as $\prod_n |\Theta_n|$, the number of vertices of \eqref{fred}
is at most $\sum_n |\Theta_n|$ 
\cite[Ch. 13.3]{de2008computational}.
The Minkowski sum of $N$ convex polygons is another convex polygon whose boundary can be computed by merging the edges of the polygons after sorting them by orientation. The procedure is detailed for arbitrary $N$ in Alg. \ref{algo:compute_minkowski_sum} and illustrated for $N=2$ in Fig.~\ref{fig:algorithm}. 


As the computation of edges can be done in linear time, the dominant step is the sorting of $K = \sum_n |\Theta_n|$ edges, which is merely $O(\log K!) = O(K \log K)$  \cite{martin1971sorting}.
In the special case
\begin{align}
    \Theta_1 = \cdots = \Theta_N = \left\{  e^{j2\pi\frac{0}{M}},e^{j2\pi\frac{1}{M}}, \ldots, e^{j2\pi\frac{M-1}{M}} \right\} ,
\end{align}
it suffices to compute for each $n$ one edge whose argument lies between $0$ and $\frac{2\pi}{M}$, thanks to the rotational symmetry \cite{pekcan2024achieving}. This means that only $N$ edges need to be sorted.

\section{Connection to Prior-Art Approaches}

Most prior derivations are based on the formula
\begin{align}
    |z| = \max_{\theta\in[0,2\pi]} \Re{e^{-j\theta} z},
\end{align}
which turns the nonlinear function $z\mapsto |z|$ into a maximum of linear functions.
This linearity enables decoupling the objective in \eqref{original_problem} into
\begin{align}
    \max_{w_1,\ldots,w_N} \bigg|\sum_n w_n h_n\bigg| &= \max_{w_1,\ldots,w_N} \max_{\theta\in[0,2\pi]} \Re{\!e^{-j\theta} \sum_n w_n h_n\!} \nonumber \\
    &= \max_{\theta\in[0,2\pi]} \sum_n \max_{w_n \in \Theta_n}  \Re{e^{-j\theta} w_n h_n} \nonumber \\
    &= \max_{\theta\in[0,2\pi]} \sum_n \max_{z \in h_n\Theta_n}\!\!  \Re{e^{-j\theta} z}. \label{linearity}
\end{align}
%

For a finite set $S=\{z_1, \ldots, z_{|S|}\}$ forming a convex polygon, as illustrated in Fig. \ref{fig:support_function_definition},
\begin{align}
    \max\limits_{z \in S}\Re{e^{-j\theta} z} = \begin{cases}
        \Re{e^{-j\theta} z_1} &\text{if $e^{j\theta}$ is in arc $1$}\\
        \hspace*{2.5em}\vdots\\
        \Re{e^{-j\theta} z_{|S|}} &\text{if $e^{j\theta}$ is in arc $|S|$}
    \end{cases}, \nonumber
\end{align}
where the complex unit circle was partitioned into $|S|$ arcs.
The mapping 
\begin{align}
    e^{j\theta}\mapsto \max_{z \in S}  \Re{e^{-j\theta} z} \label{support_function}
\end{align}
can be visualized as $|S|$ circles, one for $\theta$ sweeping each of the arcs. 
Indeed, for fixed $z$, the complex number
\begin{align}
    e^{j\theta}\Re{e^{-j\theta} z}
\end{align}
is on the circle having the origin and $z$ as endpoints of a diameter; this follows from Thale's theorem in geometry,
\begin{align}
    \bigg|e^{j\theta}\Re{e^{-j\theta} z} - \frac{z}{2}\bigg|^2 &= \frac{1}{4}\big|e^{j\theta}(e^{-j\theta} z + e^{j\theta} \overline{z}) - z\big|^2 \\
    & = \frac{|z|^2}{4}. 
\end{align}
More generally, 
\begin{align}
    e^{j\theta}\mapsto \sum_n \max_{z \in h_n\Theta_n}\!\!  \Re{e^{-j\theta} z}, \label{sum_of_support_functions}
\end{align}
can be visualized as at most $K = \sum_n|\Theta_n|$ circles, obtained efficiently by sorting the breakpoints. (This is exemplified for $N=2$ in Fig. \ref{fig:support_function_minkowski_sum}, where $(\ell_1,\ell_2)$ is the circle spawn by $e^{j\theta}$ in the $\ell_1$th arc of $P_1$ and the $\ell_2$th arc of $P_2$.)
It follows that
\begin{align}
        \sum_n \max_{z \in h_n\Theta_n}\!\!  \Re{e^{-j\theta} z} \leq \text{maximum of $K$ diameters} .
    \end{align}
As every circle has the origin as well as some point in $h_1\Theta_1 + \cdots + h_N\Theta_N$ as endpoints of a diameter, there exist $w_1, \ldots, w_N$ such that
    \begin{align}
        \text{maximum of $K$ diameters} \leq \bigg|\sum_n w_n h_n\bigg|.
    \end{align}
Altogether,
\begin{align}
    \max_{\theta\in[0,2\pi]} \sum_n \max_{z \in h_n\Theta_n}\!\!  \Re{e^{-j\theta} z} & \leq \text{maximum of $K$ diameters} \nonumber \\
    & \leq \max_{w_1,\ldots,w_N} \bigg|\sum_n w_n h_n\bigg|,
\end{align}
which holds as equality by virtue of \eqref{linearity}.
Existing proofs count the number of arcs in Fig. \ref{fig:support_function}.
Our approach instead counts the number of vertices, which is equivalent---this equivalence is termed \textit{duality} for $\Theta_n = \{0,1\}$ in \cite{allemand2001polynomial}---but more direct.


\appendix

Applying Lemma \ref{lemma:maximum_principle} (presented below), the optimization problem can be recast as
\begin{align}
    \max_{z} & \quad |z| \\
    \text{s.t.} & \quad z\in h_1\,\conv\,\Theta_1 + \cdots + h_N\, \conv\,\Theta_N, \nonumber
\end{align}
where the property of the Minkowski sum,
\begin{align}
    &\conv(h_1\Theta_1 + \cdots + h_N\Theta_N) \nonumber \\
    & \qquad\qquad=\conv(h_1\Theta_1) + \cdots + \conv(h_N\Theta_N)\\
    & \qquad\qquad =h_1\,\conv\,\Theta_1 + \cdots + h_N\, \conv\,\Theta_N, \label{minkowski}
\end{align}
was used.
The constraint set is the Minkowski sum of convex polygons, which is a convex polygon. Applying Lemma \ref{lemma:maximum_principle} again, one can surmise that it suffices to evaluate $|z|$ at the vertices, as the convex polygon is a convex hull of its vertices.

A related idea has been propounded to solve zero-one quadratic optimizations \cite{allemand2001polynomial}.

\begin{lemma}
\label{lemma:maximum_principle}
For any closed and bounded set $S\subset \bbC$,
\begin{align}
    \max_{z\in S} |z| = \max_{z\in \conv S} |z|.
\end{align}
\end{lemma}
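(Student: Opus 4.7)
The plan is to prove the two inequalities separately. Since $S\subseteq \conv S$ by definition, the inequality $\max_{z\in S}|z|\leq \max_{z\in \conv S}|z|$ is immediate, so the substance lies in the reverse direction.

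For the reverse direction, I would exploit the convexity of the modulus $|\cdot|$, which is a direct consequence of the triangle inequality. Given any $z\in \conv S$, by definition $z$ can be written as a finite convex combination $z=\sum_{i=1}^k \lambda_i z_i$ with $z_i\in S$, $\lambda_i\geq 0$, and $\sum_i \lambda_i = 1$. Applying the triangle inequality yields
\begin{align}
    |z| = \bigg|\sum_i \lambda_i z_i\bigg| \leq \sum_i \lambda_i |z_i| \leq \max_i |z_i| \leq \max_{z'\in S} |z'|. \nonumber
\end{align}
Since this bound holds for every $z\in \conv S$, taking the maximum over the left-hand side gives $\max_{z\in \conv S}|z|\leq \max_{z'\in S}|z'|$, which combined with the trivial direction yields the equality.

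One small point to address is that the maxima must be attained. The hypothesis that $S$ is closed and bounded makes it compact, so $|\cdot|$ attains its maximum on $S$ by the extreme value theorem. For the right-hand side, compactness of $\conv S$ follows from the standard fact that the convex hull of a compact subset of $\bbR^n$ (here $n=2$) is compact; alternatively, one may invoke Carath\'eodory's theorem to represent every point of $\conv S$ as a convex combination of at most three points of $S$ and pass to a limit. No step presents a real obstacle: the argument is essentially Jensen's inequality for the convex function $|\cdot|$, coupled with a routine compactness check.
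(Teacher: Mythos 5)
Your proof is correct and follows essentially the same route as the paper's: the trivial inclusion $S\subset \conv S$ for one direction, and a convex-combination representation plus the triangle inequality for the reverse. Your added compactness remark (justifying that the maxima are attained) is a sound refinement of a point the paper leaves implicit in its ``closed and bounded'' hypothesis, but it does not change the argument.
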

\begin{proof}
From $S \subset \conv S$, it is trivial that
\begin{align}
    \max_{z\in S} |z| \leq \max_{z\in \conv S} |z|,
\end{align}
and it is enough to show the inequality in the reverse direction. For $z\in\conv \,S$, by definition of convex hull, $z$ can be written as a convex combination of the elements of $S$. That is, there exist $\lambda_k\geq 0$ and $z_k\in S$ such that
\begin{align}
    z = \sum_k \lambda_k z_k \qquad \sum_k \lambda_k = 1.
\end{align}
From the triangle inequality 
\begin{align}
    |z| = \bigg|\sum_k \lambda_k z_k\bigg|\leq \sum_k \lambda_k |z_k|\leq \max_{z\in S} |z| .
\end{align}
\end{proof}

\bibliographystyle{IEEEtran}
\bibliography{ref}

@ARTICLE{ren2023ieee,
  author={Ren, Shuyi and Shen, Kaiming and Li, Xin and Chen, Xin and Luo, Zhi-Quan},
  journal={IEEE Wireless Commun. Lett.}, 
  title={A Linear Time Algorithm for the Optimal Discrete {IRS} Beamforming}, 
  year={2023},
  volume={12},
  number={3},
  pages={496-500}
}

@article{vardakis2023intelligently,
  title={Intelligently Wireless Batteryless {RF}-Powered Reconfigurable Surface: Theory, Implementation \& Limitations},
  author={Vardakis, Iosif and Kotridis, Georgios and Peppas, Spyridon and Skyvalakis, Konstantinos and Vougioukas, Georgios and Bletsas, Aggelos},
  volume={22},
  number={6},
  pages={3942-3954},
  journal={IEEE Trans. Wireless Commun.},
  year={2023},
  publisher={IEEE}
}

@ARTICLE{pekcan2024achieving,
  author={Pekcan, Dogan Kutay and Ayanoglu, Ender},
  journal={IEEE Open J. Commun. Soc.}, 
  title={Achieving Optimum Received Power for Discrete-Phase {RISs} With Elementwise Updates in the Least Number of Steps}, 
  year={2024},
  volume={5},
  number={},
  pages={2706-2722}
}

@ARTICLE{zhang2022configuring,
  author={Zhang, Yaowen and Shen, Kaiming and Ren, Shuyi and Li, Xin and Chen, Xin and Luo, Zhi-Quan},
  journal={IEEE J. Sel. Topics Signal Process.}, 
  title={Configuring Intelligent Reflecting Surface With Performance Guarantees: Optimal Beamforming}, 
  year={2022},
  volume={16},
  number={5},
  pages={967-979}
}

@ARTICLE{kutay2024received,
  author={Kutay Pekcan, Dogan and Liao, Hongyi and Ayanoglu, Ender},
  journal={IEEE Open J. Commun. Soc.}, 
  title={Received Power Maximization Using Nonuniform Discrete Phase Shifts for {RISs} With a Limited Phase Range}, 
  year={2024},
  volume={5},
  number={},
  pages={7447-7466}
}

@ARTICLE{hashemi2024optimal,
  author={Hashemi, Seyedkhashayar and Jiang, Hai and Ardakani, Masoud},
  journal={IEEE Trans. Commun.}, 
  title={Optimal Configuration of Reconfigurable Intelligent Surfaces With Arbitrary Discrete Phase Shifts}, 
  year={2024},
  volume={72},
  number={12},
  pages={8047-8060}
}

@ARTICLE{sanjay2024optimum,
  author={Sanjay Narayanan, Sai and Khankhoje, Uday K. and Krishna Ganti, Radha},
  journal={IEEE Trans. Antennas Propag.}, 
  title={Optimum Beamforming and Grating-Lobe Mitigation for Intelligent Reflecting Surfaces}, 
  year={2024},
  volume={72},
  number={11},
  pages={8540-8553}
}

@ARTICLE{alevizos2016log,
  author={Alevizos, Panos N. and Fountzoulas, Yannis and Karystinos, George N. and Bletsas, Aggelos},
  journal={IEEE Trans. Commun.}, 
  title={Log-Linear-Complexity {GLRT}-Optimal Noncoherent Sequence Detection for Orthogonal and {RFID}-Oriented Modulations}, 
  year={2016},
  volume={64},
  number={4},
  pages={1600-1612}
}

@INPROCEEDINGS{sanchez2021optimal,
  author={Sanchez, Juan and Bengtsson, Erik and Rusek, Fredrik and Flordelis, Jose and Zhao, Kun and Tufvesson, Fredrik},
  booktitle={IEEE Global Commun. Conf.}, 
  title={Optimal, Low-Complexity Beamforming for Discrete Phase Reconfigurable Intelligent Surfaces}, 
  year={2021},
  volume={},
  number={},
  pages={01-06}
}

@ARTICLE{deng2019mmwave,
  author={Deng, Junquan and Tirkkonen, Olav and Studer, Christoph},
  journal={IEEE Trans. Veh. Technol.}, 
  title={{MmWave} Multiuser {MIMO} Precoding With Fixed Subarrays and Quantized Phase Shifters}, 
  year={2019},
  volume={68},
  number={11},
  pages={11132-11145}
}

@ARTICLE{motedayenaval2003polynomial,
  author={Motedayen-Aval, I. and Anastasopoulos, A.},
  journal={IEEE Trans. Commun.}, 
  title={Polynomial-complexity noncoherent symbol-by-symbol detection with application to adaptive iterative decoding of turbo-like codes}, 
  year={2003},
  volume={51},
  number={2},
  pages={197-207}
}

@ARTICLE{mackenthun1994fast,
  author={Mackenthun, K.M.},
  journal={IEEE Trans. Commun.}, 
  title={A fast algorithm for multiple-symbol differential detection of {MPSK}}, 
  year={1994},
  volume={42},
  number={234},
  pages={1471-1474}
}

@ARTICLE{sweldens2001fast,
  author={Sweldens, W.},
  journal={IEEE Commun. Lett.}, 
  title={Fast block noncoherent decoding}, 
  year={2001},
  volume={5},
  number={4},
  pages={132-134}
}

@book{de2008computational,
  title={Computational geometry: algorithms and applications},
  author={De Berg, Mark and Cheong, Otfried and Van Kreveld, Marc and Overmars, Mark},
  year={2008},
  edition={third},
  publisher={Springer}
}

@article{allemand2001polynomial,
  title={A polynomial case of unconstrained zero-one quadratic optimization},
  author={Allemand, Kim and Fukuda, Komei and Liebling, Thomas M and Steiner, Erich},
  journal={Math. Program.},
  volume={91},
  number={1},
  pages={49--52},
  year={2001},
  publisher={Springer}
}

@article{martin1971sorting,
  title={Sorting},
  author={Martin, William A.},
  journal={ACM Comput. Surv.},
  volume={3},
  number={4},
  pages={147--174},
  year={1971},
  publisher={ACM New York, NY, USA}
}

\end{document}